\documentclass[%
 reprint,
superscriptaddress,
 amsmath,amssymb,
pra,
]{revtex4-2}

\usepackage{amsmath,amssymb,amsfonts}
\usepackage{amsthm}
\usepackage{graphicx}
\usepackage{xcolor}
\usepackage{bm}
\usepackage{hyperref}
\usepackage{tikz}
\usetikzlibrary{arrows.meta,positioning,fit,backgrounds,calc,decorations.pathreplacing}
\pgfdeclarelayer{background}\pgfsetlayers{background,main}
\usepackage{enumitem}
\hypersetup{colorlinks=true, linkcolor=blue, citecolor=red, urlcolor=blue}
\usepackage{graphicx}
\usepackage{dcolumn}
\usepackage{bm}
\usepackage{hyperref}
\usepackage{amsthm}
\usepackage{booktabs}
\usepackage{xcolor}
\usepackage{orcidlink}

\newcommand{\BDCZ}{\mathrm{BDCZ}}
\newcommand{\DAHR}{\mathrm{DAHR}}
\newcommand{\FP}{F_{P}}
\newcommand{\FS}{F_{S}}
\newcommand{\Fpz}{F_{P}^{(0)}}
\newcommand{\Fsz}{F_{S}^{(0)}}
\newcommand{\Fstar}{F_{*}}
\newcommand{\eBSA}{\eta_{\rm BSA}}
\newcommand{\emeas}{\eta_{m}}
\newcommand{\pg}{p_{g}}
\newcommand{\HE}{{\rm HE}}
\newcommand{\Phip}{\Phi^{+}}
\newcommand{\Phim}{\Phi^{-}}
\newcommand{\Psip}{\Psi^{+}}
\newcommand{\Psim}{\Psi^{-}}
\newcommand{\ket}[1]{\lvert#1\rangle}
\newcommand{\bra}[1]{\langle#1\rvert}
\newcommand{\ketbra}[2]{\lvert#1\rangle\langle#2\rvert}
\newcommand{\wzero}{w_{0}}
\newcommand{\wone}{w_{1}}

\newtheorem{theorem}{Theorem}
\newtheorem{lemma}{Lemma}
\newtheorem{corollary}{Corollary}
\newtheorem{remark}{Remark}

\begin{document}

\title{Photon-efficient quantum repeater chains via hyperentanglement-assisted purification}

\author{Vinay Kumar\,\orcidlink{0000-0002-4635-3237}}\email{vk.quantph@gmail.com}
\affiliation{Department of Information Engineering, University of Pisa, Pisa, Italy}\affiliation{Institute for Informatics and Telematics (IIT), National Research Council (CNR), Pisa, Italy}

\author{Krishan Joshi\,\orcidlink{0009-0004-7114-2476}}
\affiliation{Department of Physics, University of Naples Federico II, Naples, Italy}\affiliation{National Institute of Optics (INO), National Research Council (CNR), Florence, Italy}

\author{Claudio Cicconetti\,\orcidlink{0000-0003-4503-4223}}\affiliation{Institute for Informatics and Telematics (IIT), National Research Council (CNR), Pisa, Italy}

\date{\today}

\begin{abstract}
Linear quantum repeater chains based on Werner-state purification (the BBPSSW protocol) and entanglement swapping (the BDCZ scheme) are fundamental to entanglement distribution in quantum networks. However, they operate under a stringent operation reliability threshold and rely on resource-intensive recurrence purification rounds, each consuming two entangled pairs to probabilistically produce one. In the literature, hyperentanglement has been proposed to exploit multiple degrees of freedom (DOF), such as polarisation and spatial modes, to encode independent entangled states within a single photon pair. This has led to the definition of DAEPP (DOF-Assisted Entanglement Purification Protocol), which we propose to integrate with the BDCZ scheme, resulting in a chain protocol that we call DAHR (DOF-Assisted Hyperentanglement Repeater). The DAEPP step distils the fidelity of a DOF by consuming other(s). In this work, we propose and analyse a DAHR variant which integrates DAEPP at every segment of an end-to-end path combined with BDCZ. We derive a closed-form end-to-end fidelity recursion that embeds single-segment DAEPP into the BDCZ scheme and give a strict resource lower bound for any BDCZ baseline utilising BBPSSW purification to match DAHR's per-segment effective fidelity. At a representative asymmetric operating point informed by prior experiment, we show numerically that matching DAHR’s single-photon-pair performance requires two to three rounds of BBPSSW purification. Additionally, below a critical operation reliability, no amount of BBPSSW rounds matches DAHR's one DAEPP round performance.
\end{abstract}

\maketitle
\section{Introduction}
\label{sec:intro}

Distributing high-fidelity entanglement across long distances is the foundation for enabling quantum internet~\cite{kimble2008quantum,wehner2018quantum, kumar2025quantum,kumar2026making}. Due to channel attenuation, a direct photon transmission decays exponentially with distance. To this, the established solution is to divide the transmission into segments connected by \emph{quantum repeater}~\cite{sangouard2011quantum,muralidharan2016optimal} nodes that perform entanglement \emph{swapping}~\cite{zukowski1993event,zeilinger1997three}. Additionally, entanglement \emph{purification}~\cite{bennett1996purification,deutsch1996quantum} either on segment-level or chain-level after swapping is performed to lift the fidelity of entanglement distribution. The Briegel-D\"ur-Cirac-Zoller framework (also known as ``BDCZ'')~\cite{briegel1998quantum,dur1999quantum} established the canonical fidelity recursion for chained noisy swapping and the corresponding nested-purification protocol accounting for imperfect operations. In its original form, BDCZ describes pairs in single-DOF Werner states and addresses fidelity degradation via the ``BBPSSW'' (Bennett-Brassard-Popescu-Schumacher-Smolin-Wootters) recurrence purification protocol~\cite{bennett1996purification}, in which two low-fidelity pairs are consumed to produce a single higher-fidelity pair conditioned on a stochastic Bell-measurement outcome.

This, in turn, leads to two structural performance limits. First, imperfect quantum operations steeply degrade what the recurrence can achieve, and below a certain threshold, no number of rounds is sufficient. Second, the photon-pair budget per delivered end-to-end ebit grows as $\sim 2^{k}$ in the number of recurrence rounds $k$ of purification, excluding any further potential failures.

In parallel, a separate line of research in hyperentangled photonics~\cite{kwiat1997hyper,deng2017quantum} has developed an alternate primitive called DOF-assisted entanglement purification protocol (DAEPP)~\cite{sheng2010deterministic,sheng2010one,li2010deterministic,hu2021long}. In this scheme, the entanglement carried in one DOF or mode of a hyperentangled pair (e.g.\ spatial mode) is consumed in a one-shot circuit to purify entanglement carried in another DOF (e.g.\ polarisation). In an experiment, Hu \emph{et al.}~\cite{hu2021long} in 2021 reported entanglement purification using only one pair of hyperentangled state in a fiber over an 11-km multicore-fiber link.

DAEPP has not, to our knowledge, been integrated or studied with chain-level swapping in a closed-form recursion. Alternatively, the hyperentanglement-swapping protocol of Sheng \textit{et al.}~\cite{sheng2010complete} performs Bell-state analysis on both DOFs simultaneously but treats only a single segment and under noiseless-operation assumptions. Additionally, hyperentanglement purification protocols (hyper-EPPs)~\cite{ren2013hyperentanglement,ren2014two,wang2016hyperentanglement} have been studied addressing arbitrary-error refinement on a single segment, but are not formulated as part of a chain recursion. Recent theoretical work on noisy repeater protocol~\cite{victora2023entanglement,benchasattabuse2025integrating,mylavarapu2025teleportation,ghosal2025repeater,miguel2023quantum} considers only single-DOF Bell pairs and orthogonal architectural questions. In particular, the Mylavarapu \emph{et al.} analysis of Werner-state network topologies~\cite{mylavarapu2025teleportation} and the Ramiro \textit{et al.} W-state repeater~\cite{miguel2023quantum} are closely related works but use different state classes and quantifiers.

In this work, we make the following contributions.
\begin{enumerate}
\item An \emph{exact two-fidelity per-segment model} for the DAEPP step that generalises D\"ur et al.'s imperfect-operation BBPSSW recurrence to unequal input fidelities, recovering the two-state bit-flip formula and the symmetric recurrence as limits.
\item A \emph{closed-form end-to-end fidelity recursion} for quantum repeater chains (a DAHR variant) that embeds the DAEPP into the BDCZ scheme.
\item A \emph{strict photon-resource lower bound} on any BDCZ baseline matching DAHR's per-segment fidelity, with native (state-dependent) success probabilities.
\item \emph{Numerical evaluation} at $(\Fpz,\Fsz)=(0.85,0.95)$, $\eBSA=p_1=p_2=0.99$, in an ideal-operation reliability and a matched-operation reliability regime, with a resource-cost comparison and a per-round advantage map to study and compare DAHR quantitatively with established entanglement distribution schemes.
\end{enumerate}

\section{Preliminaries}
\label{sec:prelim}

\subsection{BDCZ chain swap recursion and BBPSSW recurrence purification}\label{sec.bbpssw}

Consider $N$ segments of single-DOF Werner pairs of fidelity $F$. Then entanglement swapping on $N-1$ intermediate nodes with imperfection parameters  $p_1,p_2,\eBSA$ gives the BDCZ end-to-end fidelity as~\cite{briegel1998quantum,dur1999quantum}
\begin{equation}
F_N = \tfrac14\!\left[1 + 3\,(p_1p_2)^{N-1}\!\left(\tfrac{4\eBSA^2-1}{3}\right)^{\!N-1}\!\left(\tfrac{4F-1}{3}\right)^{\!N}\right].
\label{eq:bdcz}
\end{equation}

The general expression for BBPSSW with imperfect operations involving two different Werner fidelity inputs, that is, fidelity $\FS$ for the source/kept pair and fidelity $\FP$ for the partner/measured pair, is
\begin{equation}\label{eq:twofid}
F'(\FS,\FP;\emeas,\pg) = \frac{N(\FS,\FP, \emeas, \pg)}{D(\FS,\FP, \emeas, \pg)},
\end{equation}
with, writing $w_0=\emeas^2+(1-\emeas)^2$, $w_1=2\emeas(1-\emeas)$, $m=(1-\pg^2)/(8\pg^2)$,
\begin{equation}
\label{eq:Nfinal}
\begin{split}
N ={}&
\Bigl[\FS \FP+\tfrac{(1-\FS)(1-\FP)}{9}\Bigr]w_0
\\
&+\Bigl[\tfrac{\FS(1-\FP)}{3}
+\tfrac{(1-\FS)(1-\FP)}{9}\Bigr]w_1 + m,
\end{split}
\end{equation}
\begin{equation}
\label{eq:Dfinal}
\begin{split}
D ={}&
\Bigl[\FS \FP
+\tfrac{\FS(1-\FP)+\FP(1-\FS)}{3}
+\tfrac{5(1-\FS)(1-\FP)}{9}\Bigr]w_0
\\
&+\Bigl[\tfrac{(1-\FS)(1-\FP)}{9}
+\tfrac{\FS(1-\FP)+\FP(1-\FS)}{6}\Bigr]4w_1
+4m.
\end{split}
\end{equation}
where, $\emeas$ is the measurement reliability parameter and $\pg$ is the gate reliability parameter in application of \textsc{cnot} during purification. See Appendix~\ref{app:twofid} for the detailed derivation of the general expression in Eqs.~\eqref{eq:twofid}--\eqref{eq:Dfinal}. The denominator in Eq.~\eqref{eq:Dfinal} is the native acceptance probability up to the $\pg^2$ rescaling, that is, $p_{\rm acc}=\pg^2 D$. So the fidelity and resource counting are computed from the same quantity. Also, the symmetric map, that is, involving equal Werner fidelity inputs, can be written as $T(F)\equiv F'(F,F;\emeas,\pg)$.

\subsection{Hyperentanglement and the DAEPP step}
\label{sec:depp}
In general, hyperentanglement in degrees of freedom up to three is possible~\cite{kwiat1997hyper,zhao2023generation}, which in turn unlocks more complex and enhanced DAEPP step involving more than one purification round with a single hyperentangled pair. While such a DAEPP step would be an interesting study on its own, we choose a simpler DAEPP step by considering hyperentanglement pair in two degrees of freedom (polarisation-spatial). This enables evaluation in an experimentally grounded regime~\cite{hu2021long} and comparison with established single-DOF-based protocols. A polarization--spatial hyperentangled pair is prepared as
\begin{equation}
|\Phi_{\HE}\rangle = |\Phi\rangle_P \otimes |\phi\rangle_S.
\label{eq:HE}
\end{equation}
Similar to the 1-DOF case, one photon to each segment endpoint is propagated. The Li circuit~\cite{li2010deterministic} and the experimental realisation by Hu \emph{et al.}~\cite{hu2021long} apply an intra-photon bilateral \textsc{cnot} between the spatial DOF (control) and polarisation DOF (target) and measure the spatial mode, retaining the polarisation pair on a heralded coincidence. This is the same bilateral-\textsc{cnot} purification step used in BBPSSW (Sect.~\ref{sec.bbpssw}, Appendix~\ref{app:twofid}), with the polarisation pair as source/kept and the spatial pair as partner/measured. Finally, measurement of spatial DOF distils the polarisation DOF.

\begin{figure*}[t]
\centering
\includegraphics[width=0.9\textwidth]{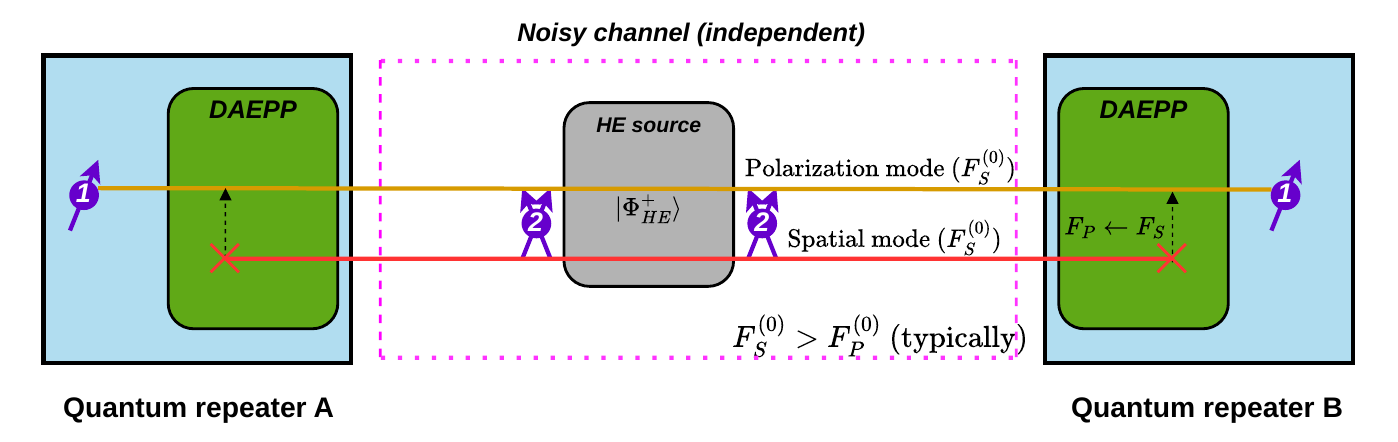}
\caption{The DAHR per-segment workflow. Left to right follows: quantum repeater A, the noisy channel, and quantum repeater B. Hyperentangled pairs are generated by a source in the middle (\textit{MidPointSource} scheme \cite{kumar2025quantum}). The spatial degree of freedom (in red) is consumed to upgrade the polarisation mode before chain-level entanglement swapping.}
\label{fig:schematic}
\end{figure*}

\paragraph*{Regime positioning.}
Sheng \& Deng~\cite{sheng2010deterministic} and Li~\cite{li2010deterministic} idealise the spatial DOF as $\FS=1$ and obtain deterministic purification $\FP'\to1$. Sheng \& Deng~\cite{sheng2010deterministic} utilised two cross-Kerr steps (spatial DOF for the bit-flip sector, frequency DOF for the phase-flip sector) while Li~\cite{li2010deterministic} and Sheng \& Deng~\cite{sheng2010one} utilised a single passive linear-optics step that transfers the full spatial Bell structure onto polarisation. Hu \emph{et al.}~\cite{hu2021long} realised the architecture over 11 km with a single spatial parity check that corrects the bit-flip sector, in the ideal-operation reliability limit ($\emeas=\pg=1$) and utilising a two-state $\{\Phi^+, \Psi^+\}$ input case rather than four. For this, the distilled fidelity with two different fidelity inputs is,
\begin{equation}
\Fstar^{\rm bf} = \frac{\FP\FS}{\FP\FS+(1-\FP)(1-\FS)}.
\label{eq:hu}
\end{equation}
While for a full Werner input, the same ideal-operation reliability limit reduces the generalised expression (Eqs.~\eqref{eq:twofid}--\eqref{eq:Dfinal}) to Eq.~\eqref{eq:fstar-ideal}.  We adopt this experimentally grounded single round as our DAEPP primitive while using in this work the more general and more conservative \emph{Werner} model: after a standard bilateral twirl, each DOF is a Werner state, and the DAEPP output is the full two-fidelity map Eq.\eqref{eq:twofid}. The use of twirl (random correlated bilateral rotations averaging an arbitrary Bell-diagonal state to the isotropic Werner form of equal fidelity) is the standard~\cite{bennett1996mixed} that makes the chain recursion tractable. It is fidelity-preserving and can only discard, never create exploitable correlation. So every fidelity we report is a lower bound on what a model retaining the full Bell-diagonal structure would give. We use it for both DAHR and the BDCZ baseline, so the comparison is unaffected by the choice.

\begin{lemma}[Exact two-fidelity DAEPP output]
\label{lem:depp}
Let the per-segment hyperentangled pair, after channel noise and a bilateral twirl, be a product of Werner states with polarisation fidelity $\FP$ (kept) and spatial fidelity $\FS$ (measured). Then one DEAPP round after a standard bilateral twirl outputs a Werner polarisation pair of fidelity
\begin{equation}
\Fstar = F'(\FP,\FS;\emeas,\pg),
\label{eq:fstar}
\end{equation}
given by Eqs.\eqref{eq:twofid}--\eqref{eq:Dfinal}, retained with heralded probability $p_{\rm DAEPP}=\pg^2 D(\FP,\FS)$. In the ideal-operation reliability limit $\emeas=\pg=1$,
\begin{equation}
\Fstar = \frac{\FP\FS+\tfrac{(1-\FP)(1-\FS)}{9}}{\FP\FS+\tfrac{\FP(1-\FS)+\FS(1-\FP)}{3}+\tfrac{5(1-\FP)(1-\FS)}{9}} .
\label{eq:fstar-ideal}
\end{equation}
\end{lemma}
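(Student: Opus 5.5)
The plan is to reduce the DAEPP step to the two-pair BBPSSW computation already carried out in Appendix~\ref{app:twofid}, and then take the ideal-operation limit.

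\textbf{Step 1: identify DAEPP with a two-pair bilateral \textsc{cnot} step.} After the bilateral twirl, each segment state is $\rho_P\otimes\rho_S$, where $\rho_P$ and $\rho_S$ are Werner states of fidelity $\FP$ and $\FS$. The intra-photon bilateral \textsc{cnot} (spatial control, polarisation target) acts on the four qubits $(P_A,S_A,P_B,S_B)$. It has exactly the same matrix action as the inter-pair bilateral \textsc{cnot} of BBPSSW, with the kept pair given by the target DOF and the measured pair given by the control DOF. Since $\rho_P\otimes\rho_S$ is a product, the fact that the two qubits on each side now live in one photon is irrelevant to the algebra.

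\textbf{Step 2: expand in the Bell basis and apply the noise models.} Write each Werner state as a Bell-diagonal mixture, with weight $F$ on $\Phip$ and $(1-F)/3$ on each of the other three Bell states. Under the bilateral \textsc{cnot}, a product of Bell states maps to another product of Bell states. The parity measurement on the partner accepts exactly those input combinations whose partner output lies in the $\{\Phi^\pm\}$ sector.

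The imperfections then enter as follows:
\begin{itemize}
\item The gate noise $\pg$ is modelled as a depolarising component $(1-\pg)\,\mathbb{1}/4$ on each pair. Averaged over outcomes, it contributes the constant $m$ to the numerator and $4m$ to the denominator, after normalising by $\pg^2$.
\item The measurement noise $\emeas$ gives correct parity with weight $\wzero$ and flipped parity with weight $\wone$. The flipped-parity branch is what produces the $\wone$ terms: the kept pair is then conditioned on the wrong sector.
\end{itemize}
Summing the weights in which the kept pair is $\Phip$ gives $N$, and summing all accepted weights gives $D$. Both coefficient lists are already tabulated in Appendix~\ref{app:twofid}. Substituting $\FS\to\FP$ in the kept role and $\FP\to\FS$ in the measured role yields $\Fstar=F'(\FP,\FS;\emeas,\pg)$. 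The success probability $\pg^2D$ is read off from the same unnormalised sum.

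\textbf{Step 3: restore the Werner form and take the ideal limit.} The output is Bell-diagonal, so a final bilateral twirl maps it to a Werner state. The twirl preserves the $\Phip$ weight, so the fidelity is still $\Fstar$. For the ideal limit, set $\emeas=\pg=1$. Then $\wzero=1$, $\wone=0$ and $m=0$, so $N$ and $D$ reduce to their $\wzero$ brackets. With the arguments ordered as $(\FP,\FS)$, this gives exactly Eq.~\eqref{eq:fstar-ideal}.

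\textbf{Main obstacle.} The delicate part is the bookkeeping in Step 2, specifically the asymmetric role of the kept and measured pairs in the $\wone$ and cross terms. The numerator term $\FS(1-\FP)/3$ in Eq.~\eqref{eq:Nfinal} is not symmetric in its two arguments. I therefore need to check that, in the DAEPP assignment, the "source/kept" slot is filled by the polarisation fidelity, which is why the lemma writes $F'(\FP,\FS)$.

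I would verify this ordering by tracking which inputs contribute to each bracket, for example $\Phip_P\otimes\Psi^{\pm}_S$ with a misread parity. As a consistency check, I would reduce to the two-state $\{\Phip,\Psip\}$ input and confirm that Eq.~\eqref{eq:hu} is recovered.
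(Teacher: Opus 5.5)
\textbf{Same route as the paper, but Step~1 is wrong as written.} Your overall approach matches the paper's proof. That proof simply invokes the sixteen-row enumeration of Appendix~\ref{app:twofid}, puts polarisation in the kept slot and spatial in the measured slot, and sets $\wzero=1$, $\wone=0$, $m=0$ for the ideal limit. That is your Steps~2--3.

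Step~1, which carries the whole reduction, fails. You identify the DAEPP gate with the BBPSSW bilateral \textsc{cnot} while making the \emph{kept} pair the target and the \emph{measured} pair the control. In Table~\ref{tab:master} it is the other way round (source $=$ control, partner $=$ target), and the two are not the same operation. A \textsc{cnot} never changes the $Z$ value of its control. A $Z$-parity readout of the control pair therefore returns only its own bit $a_S$, which is independent of the polarisation pair; the parity $a_P\oplus a_S$ is written onto the target instead. With ideal readout, accepting on $a_S=0$ leaves the kept pair at fidelity exactly $\FP$. So your orientation, combined with the readout of Appendix~\ref{app:twofid}, does no purification at all. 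The prose of Sec.~\ref{sec:depp} invites your reading, but the derivation the paper actually relies on fixes the kept pair as control.

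\textbf{How to repair it.} The lemma survives, but you must state one of two repairs.
\begin{enumerate}
\item Take polarisation as control and the spatial DOF as target. Then Table~\ref{tab:master} applies verbatim.
\item Keep spatial as control, but read its parity in the $X$ basis. This is the bilateral-Hadamard conjugate of the appendix step:
\begin{itemize}
\item $H\otimes H$ fixes $\Phip$ and only permutes $\Phim,\Psip,\Psim$, so Werner inputs and the output $\Phip$ weight are unchanged;
\item the depolariser commutes with it;
\item $X$-readout flips take the place of $Z$-readout flips.
\end{itemize}
This gives the same $F'(\FP,\FS;\emeas,\pg)$ and the same $\pg^2 D(\FP,\FS)$. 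Note that this second repair genuinely uses the Werner (twirl) hypothesis.
\end{enumerate}
Your planned Hu check would have exposed the problem. In the $X$-readout orientation, a bit-flip mixture $\{\Phip,\Psip\}$ is always accepted and the spatial bit flips are copied onto polarisation. Eq.~\eqref{eq:hu} is then recovered only for phase-flip inputs $\{\Phip,\Phim\}$.

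\textbf{Gate-noise placement.} Your gate-noise sentence should put the depolariser on the two qubits each local \textsc{cnot} touches (both DOFs of one photon), as in Eq.~\eqref{eq:gate-model}. That placement is what sends any failure to $I_{16}/16$ and makes the floor $m$, $4m$ input-independent. Depolarising each Bell pair separately would produce input-dependent corrections instead.
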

\begin{proof} Refer Appendix~\ref{app:twofid}.
\end{proof}

\begin{remark}[Limits]
\label{rem:limits}
(a) $\FP=\FS=F$ in Eq.~\eqref{eq:fstar} gives D\"ur's symmetric BBPSSW recurrence~\cite{dur1999quantum}. (b) In the bit-flip-only restriction, Eq.~\eqref{eq:fstar} collapses to the Hu formula Eq.~\eqref{eq:hu} and, as $\FS\to1$, $\Fstar\to1$.
(c) Under the full Werner model, $\FS\to1$ gives only $\Fstar\to 3\FP/(2\FP+1)<1$. A single round here clears the bit-flip (amplitude) sector but leaves the kept pair's phase errors. Note that the same holds for the operation of one BBPSSW round. Convergence to unity fidelity utilises multiple rounds, which comes from iterating with re-twirling.
\end{remark}

\section{System model and protocol}
\label{sec:protocol}

\subsection{Channel, noise, and parameters}
The model in consideration is a linear chain having $N$ segments, each emitting one hyperentangled pair (Eq.~\eqref{eq:HE}). Independent channels act on the two DOFs; that is, after a bilateral twirl, each DOF is a Werner state, with post-channel polarisation and spatial fidelities $\Fpz,\Fsz$. Typically $\Fsz>\Fpz$, the path/spatial DOF is less exposed to the birefringence and mode-dependent dephasing that degrade polarisation in fiber~\cite{sheng2010deterministic,sheng2010one,li2010deterministic,hu2021long}, making it the more robust mode and hence the natural high-fidelity reference to measure against. Purification operations (DAEPP and BBPSSW alike) carry measurement reliability $\emeas$ and gate reliability $\pg$, while the swap nodes carry $\eBSA,p_1,p_2$ as in Eq.~\eqref{eq:bdcz}.

\subsection{Per-segment workflow}\label{ssec:persegment}
For each segment (see Fig.~\eqref{fig:schematic}), the steps are as follows:
\begin{enumerate}
    \item emit $|\Phi^+_{\HE}\rangle$ from the source,
    \item transmit via quantum channel, giving twirled Werner states $(\Fpz,\Fsz)$,
    \item run one DAEPP round at one endpoint, keeping polarisation and measuring the spatial mode, producing a Werner polarisation pair of fidelity $\Fstar=F'(\Fpz,\Fsz;\emeas,\pg)$ (Lemma~\ref{lem:depp}), retained with probability $p_{\rm DAEPP}$.
\end{enumerate}
Then, the $N$ purified segments undergo standard polarisation-only chained swapping with feed-forward Pauli corrections, exactly as in the BDCZ scheme.

The architecture is intentionally lean: one hyperentangled pair per segment, one heralded DAEPP round consuming the spatial DOF (not a second pair), and standard chain swapping thereafter.

\section{Recursion and Resource Analysis}
\label{sec:analysis}

\subsection{End-to-end fidelity}
\begin{theorem}[DAHR end-to-end fidelity]
\label{thm:dahr-recursion}
Under the model of Sec.~\ref{sec:protocol}, for $N\geq1$, similar to model in Eq.~\eqref{eq:bdcz},
\begin{align}
F_N^{\DAHR} = \tfrac14\!\left[1 + 3(p_1p_2)^{N-1}\!\left(\tfrac{4\eBSA^2-1}{3}\right)^{\!N-1}\!\left(\tfrac{4\Fstar-1}{3}\right)^{\!N}\right]\!,
\label{eq:dahr-recursion}
\end{align}
with $\Fstar=F'(\Fpz,\Fsz;\emeas,\pg)$ from Lemma~\ref{lem:depp}.
\end{theorem}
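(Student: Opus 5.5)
The plan is to reduce the statement to the BDCZ recursion Eq.~\eqref{eq:bdcz} by showing that, after the per-segment DAEPP step, each segment carries exactly the kind of object that Eq.~\eqref{eq:bdcz} takes as input: a single-DOF polarisation Werner pair. The only change is that its fidelity is $\Fstar$ rather than $F$. Concretely, I will apply Lemma~\ref{lem:depp} segment by segment. By the model of Sec.~\ref{sec:protocol}, each segment's post-channel state is a product of Werner states with fidelities $(\Fpz,\Fsz)$, so on the heralded branch the lemma yields a polarisation Werner state $\rho_W(\Fstar)$ with $\Fstar=F'(\Fpz,\Fsz;\emeas,\pg)$. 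The spatial DOF is consumed by the measurement. Since the DAEPP is performed independently per segment, conditioned on all $N$ heralds the joint state of the chain is $\rho_W(\Fstar)^{\otimes N}$.

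Next I will rederive the chained-swap recursion in Werner-parameter form, so that its exact input assumptions are explicit. Write $\rho_W(F)=x\,\ketbra{\Phip}{\Phip}+(1-x)\mathbb{1}/4$, where $x=(4F-1)/3$. Take the BDCZ noise model for an imperfect Bell measurement at an intermediate node: the two-qubit depolarising gate noise gives $p_1p_2$, and the measurement-error channel reduces the Werner parameter by $(4\eBSA^2-1)/3$. Under this model, swapping two Werner pairs with parameters $x_a,x_b$ followed by the Pauli feed-forward yields a Werner pair with parameter $p_1p_2\frac{4\eBSA^2-1}{3}x_a x_b$. The key reason is that the depolarised component is mapped to the maximally mixed state, and the twirl covariance keeps the output Werner. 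Induction on $N$ then gives $x_N=(p_1p_2)^{N-1}\bigl(\tfrac{4\eBSA^2-1}{3}\bigr)^{N-1}x^N$, and converting back via $F_N=(1+3x_N)/4$ with $x=(4\Fstar-1)/3$ gives Eq.~\eqref{eq:dahr-recursion}. The base case $N=1$ is immediate, since no swap occurs and $F_1=\Fstar$. Since Eq.~\eqref{eq:bdcz} is quoted as established, I can equally just invoke it with $F\mapsto\Fstar$; the induction only makes explicit that nothing beyond Werner-ness of the inputs is used.

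The main obstacle is justifying that the swap recursion, which was derived for a Werner input, applies verbatim to the DAEPP output. Two points need care. First, the raw output of the bilateral-CNOT circuit is Bell-diagonal but not Werner, so I must use the lemma's statement that the output is re-twirled to Werner form; otherwise the product formula fails, because non-isotropic Bell-diagonal states compose under swapping through Pauli-sector convolution. Second, I must check that conditioning on heralds introduces no correlations across segments. This holds because the channels and the DAEPP acting on different segments are independent, so postselection on a product of local events leaves a product state.

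Finally, I will note that the swap-node parameters $p_1,p_2,\eBSA$ are untouched by the DAEPP, so the prefactor in Eq.~\eqref{eq:dahr-recursion} is identical to the one in Eq.~\eqref{eq:bdcz}. The success probability $p_{\rm DAEPP}^N$ affects only rate and resources, not the conditional fidelity, which completes the proof.
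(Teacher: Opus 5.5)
Your proposal is correct and follows the paper's proof: apply Lemma~\ref{lem:depp} on each segment, argue that the heralded chain state is the product $\rho(\Fstar)^{\otimes N}$ because the segments act on disjoint Hilbert spaces and are heralded independently, and then invoke Eq.~\eqref{eq:bdcz} with $F\mapsto\Fstar$ under the standard Werner-twirl convention. Your optional Werner-parameter induction is a harmless elaboration. Note, however, that iterating the per-swap factor $p_1p_2(4\eBSA^2-1)/3$ presupposes a re-twirl after each swap, since readout errors leave a Bell-diagonal but non-Werner output; there too you are relying on the BDCZ convention, not on twirl covariance alone.
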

\begin{proof}
The end-to-end fidelity is essentially a composition of the per-segment DAEPP round with the BDCZ chain. After step 3 of Sec.~\ref{ssec:persegment}, segment $i$ essentially holds a polarization pair in the Werner state $\rho(\Fstar)$ with $\Fstar=F'(\Fpz,\Fsz;\emeas,\pg)$, by Lemma~\ref{lem:depp}. The $N$ post-DAEPP pairs are statistically independent; in other words, the channel noise on segment $i$ acts on photonic Hilbert spaces disjoint from those of segment $j\neq i$. And at each segment, the DAEPP acts locally at one endpoint (see Fig.~\eqref{fig:schematic}), commuting with the channel acting on the partner photon, so no correlations are induced between segments. The chain entering the swap stage is therefore the product state $\bigotimes_{i=1}^{N}\rho(\Fstar)$. Therefore, the swap stage is identical to the BDCZ chain on single-DOF Werner pairs of starting fidelity $\Fstar$, with swap-node parameters $\eBSA,p_1,p_2$. So Eq.~\eqref{eq:bdcz} applies the map $F\mapsto\Fstar$, giving Eq.~\eqref{eq:dahr-recursion}. As discussed in Sec.~\ref{sec:depp}, the twirl to Werner form before swapping is the standard BDCZ convention and is fidelity-preserving (it fixes the singlet fraction $\Fstar$); it discards only correlation information not used by the chain recursion, so Eq.~\eqref{eq:dahr-recursion} is a conservative estimate.
\end{proof}
\begin{remark}[Versus raw BDCZ]
The raw-BDCZ recursion follows from $\Fstar\mapsto\Fpz$ in Theorem~\eqref{thm:dahr-recursion}. Since $\Fstar>\Fpz$ at the operating point, the per-segment factor $(4\Fstar-1)/3$ exceeds $(4\Fpz-1)/3$ and the gap is amplified as the $N$-th power.
\end{remark}

\subsection{Role asymmetry and the spatial-robustness argument}
\label{sec:asymmetry}
The denominator (Eq.~\eqref{eq:Dfinal}) is symmetric under $\FS\leftrightarrow \FP$ (an acceptance probability cannot depend on which pair is privately kept), but the numerator (Eq.~\eqref{eq:Nfinal}) is not. The $w_1$ term in the numerator ($\propto \FS(1-\FP)$) is an accept case in which the kept pair is $\ket{\Phip}$ and the discarded pair is $\ket{\Psip}$, resulting from measurement error in exactly one of the measured qubits (Appendix~\ref{app:twofid}). The exact gap due to this is given by:
\begin{equation}
\Delta F^{'}=F'(\FS,\FP)-F'(\FP,\FS) = \frac{2\emeas(1-\emeas)}{3\,D}\,(\FS-\FP).
\label{eq:asym}
\end{equation}
From this, two architectural design insights can be drawn. First, since $D>0$, the gap has the sign of $\FS-\FP$: \emph{it pays to keep the higher-fidelity pair ($\FS$)}. Second, the magnitude is controlled by $\emeas(1-\emeas)$, vanishing as $\emeas\to1$, that is, with a perfectly reliable measurement, and the slots become interchangeable.

However, in this work, DAHR's role assignment is a structural constraint, not a fidelity optimisation. The polarisation mode carries the repeater qubit through the swap chain and must be kept, while the spatial mode is the auxiliary degree of freedom within the same hyperentangled pair and is therefore consumed by the parity measurement~\cite{li2010deterministic,sheng2010deterministic,sheng2010one,hu2021long}. Within this fixed assignment, two distinct features of the generalised two-fidelity relation enter, with very different magnitudes.

The spatial mode is the more robust DOF in fiber, $\FS>\FP$, so DAHR measures against a high-fidelity reference, whereas BBPSSW, having no second DOF can only measure another polarisation copy at fidelity $\FP$. At the operating point $(\FP, \FS ) = (0.85, 0.95)$ with matched operation reliability $\emeas = \pg = 0.99$, the same kept polarisation is purified to a strictly higher single-round output,
\begin{equation}
\underbrace{F'(\FP,\FS)}_{\text{DAHR}}=0.915 \;>\; \underbrace{F'(\FP,\FP)}_{\text{BBPSSW}}=0.874,
\label{eq:perround}
\end{equation}
which is a $\sim+0.041$ per-round advantage. DAEPP consumes one DOF while BBPSSW a whole extra pair. This is the entire source of DAHR's per-round gain over BBPSSW, and it is the substantive content of the spatial-robustness argument.

A separate, sub-leading effect is the role asymmetry derived in Eq.~\eqref{eq:asym}. Read as a free optimisation, it would in fact prescribe keeping the higher-fidelity pair and measuring the lower one, which, with $\FS>\FP$, would mean keeping the spatial pair and measuring polarisation. DAHR cannot adopt that configuration without breaking the protocol, since polarisation is the data carrier. The cost of being locked into the keep-noisier-pair slot is therefore the role-swap gap measured by Eq.~\eqref{eq:asym} which at $\emeas=\pg=0.99$ and the operating point evaluates to $\sim 7\times10^{-4}$, two orders of magnitude smaller than the per-round gain Eq.~\eqref{eq:perround}, and vanishes entirely as readout improves. The literature's prescription \emph{keep polarisation, measure spatial} is thus the protocol-determined assignment \cite{sheng2010deterministic,sheng2010one,li2010deterministic,hu2021long}, and the model quantifies it as essentially optimal. The asymmetry penalty incurred for being forced into the slot the free-optimisation argument would have rejected is negligible against the dominant measured-fidelity gain.

\subsection{Photon-resource lower bound vs.\ BDCZ-with-BBPSSW purification}\label{ssec:resource}
\begin{figure*}[t]
\centering
\includegraphics[width=0.9\textwidth]{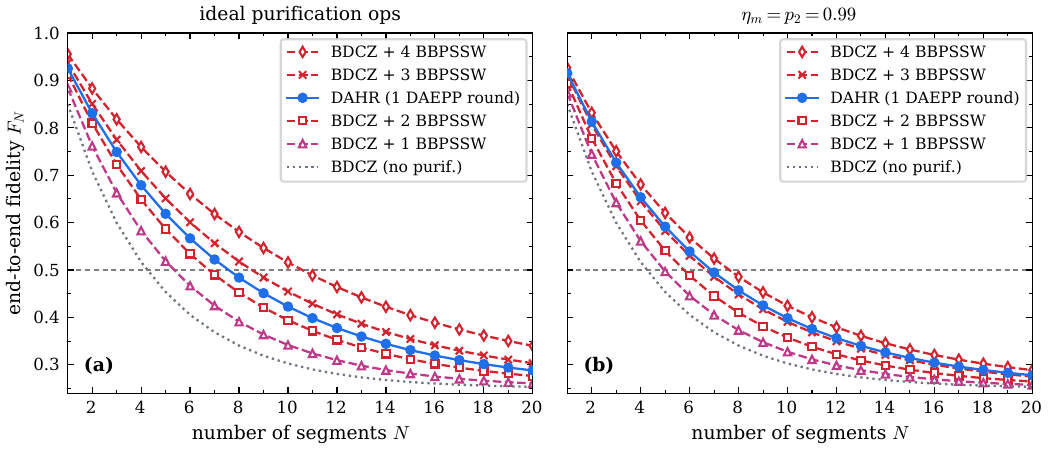}
\caption{End-to-end fidelity $F_N$ vs chain length at the operating point $(\Fpz,\Fsz)=(0.85,0.95)$, with BDCZ parameters as $\eBSA=p_1=p_2=0.99$. (a) ideal purification operation reliability ($\emeas=\pg=1$); (b) matched operation reliability ($\emeas=\pg=0.99$). DAHR (one DAEPP round) stays above BDCZ-with-2-BBPSSW for the ideal purification operation reliability case and is even marginally better than BDCZ-with-3-BBPSSW for the matched operation reliability case across the number of segment $N$ range.}
\label{fig:fidelity-vs-N}
\end{figure*}

\begin{figure}[t]
\centering
\includegraphics[width=\columnwidth]{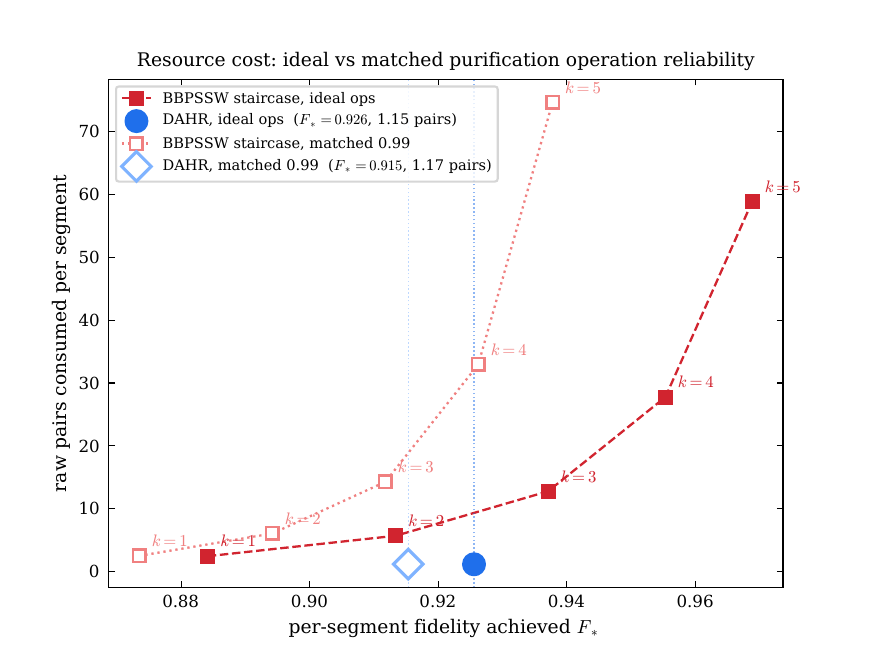}
\caption{Raw pairs consumed per segment versus delivered per-segment fidelity for the ideal and matched purification-operation reliability regimes. In the ideal regime, DAHR (one DAEPP, blue, solid circle) achieves $(0.926, 1.15)$, while the BBPSSW staircase (red, solid square, $k=1\ldots5$) requires $k=2$–$3$ rounds to reach comparable performance, corresponding to $5.7$–$12.8$ raw pairs. In the matched-operation reliability regime, DAHR (one DAEPP, blue, hollow diamond) attains $(0.915, 1.17)$, whereas BBPSSW (red, hollow square) must exceed $k=3$ rounds to approximately match this point, requiring about $15$ raw pairs.
}
\label{fig:budget}
\end{figure}

To match DAHR's per-segment fidelity with a BDCZ baseline one applies symmetric BBPSSW rounds, each an instance of the map (Eqs.~\eqref{eq:twofid}--\eqref{eq:Dfinal}) at $\FS=\FP$.
\begin{theorem}[Photon-resource lower bound]
\label{thm:resource}
Let $T(F)=F'(F,F;\emeas,\pg)$ and $p_{\rm acc}(F)=\pg^2 D(F,F)$, and let $k^*$ be the least integer with $T^{k^*}(\Fpz)\geq\Fstar$ ($k^*=\infty$ if none). Any BDCZ baseline built from symmetric BBPSSW recurrence rounds that matches DAHR's per-segment $\Fstar$ consumes
\begin{equation}
n_{\BDCZ}^{\rm per\,seg} \geq \prod_{r=0}^{k^*-1}\frac{2}{p_{\rm acc}(T^r(\Fpz))} \geq 2^{k^*}
\label{eq:resource-bound}
\end{equation}
polarisation pairs per segment. DAHR uses one hyperentangled pair, retained with $p_{\rm DAEPP} = \pg^2 D(\FP, \FS)$, i.e.\ $1/p_{\rm DAEPP}$ pairs per delivered segment.
\end{theorem}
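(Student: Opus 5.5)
The plan is to argue directly from the structure of a symmetric BBPSSW recurrence tree, counting expected raw pairs round by round. I will define $F_r = T^r(\Fpz)$ as the fidelity after $r$ recurrence rounds. The first step is to fix the class of baselines. A BDCZ baseline built from symmetric BBPSSW rounds produces a per-segment pair whose fidelity is some $F_r$: each round takes two pairs of equal fidelity $F_r$, applies the map of Eqs.~\eqref{eq:twofid}--\eqref{eq:Dfinal} with $\FS=\FP=F_r$, and heralds success with probability $p_{\rm acc}(F_r)=\pg^2D(F_r,F_r)$. Matching DAHR means the delivered pair has $F_r\geq\Fstar$. By the definition of $k^*$ as the least such integer, any matching baseline uses $r\geq k^*$ rounds; if $k^*=\infty$ no baseline matches and the bound holds vacuously.

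Next I set up the cost recursion. Let $n_r$ be the expected number of raw polarisation pairs needed to deliver one pair at level $r$, with $n_0=1$. A level-$(r+1)$ pair is made from two independent level-$r$ pairs and succeeds with probability $p_{\rm acc}(F_r)$. On failure both inputs are destroyed by the bilateral measurement, so the attempts form a geometric trial, and the expected number of level-$r$ pairs consumed is $2/p_{\rm acc}(F_r)$. By linearity of expectation (Wald's identity), $n_{r+1}=2n_r/p_{\rm acc}(F_r)$, which unrolls to $n_r=\prod_{j=0}^{r-1}2/p_{\rm acc}(F_j)$.

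Since every factor is at least $2\geq 1$, $n_r$ is nondecreasing in $r$. So any baseline with $r\geq k^*$ consumes at least $n_{k^*}$, which is the first inequality in Eq.~\eqref{eq:resource-bound}. The second inequality needs $p_{\rm acc}\leq1$. I would verify this directly from Eq.~\eqref{eq:Dfinal} at $\FS=\FP=F$ rather than just appealing to it being a probability. In the ideal case, $D$ is the sum of the accepted-coincidence weights among the 16 Bell-pair combinations, and these sum to at most one. With $m$ included, $\pg^2\cdot 4m=(1-\pg^2)/2\leq 1-\pg^2$, so $\pg^2D\leq \pg^2\cdot 1+(1-\pg^2)=1$, using that the $w_0,w_1$ bracket is at most one because $w_0+w_1=1$. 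This gives $n_{k^*}\geq2^{k^*}$. The DAHR count follows the same way: one hyperentangled pair per attempt with herald probability $p_{\rm DAEPP}$ from Lemma~\ref{lem:depp} gives an expected $1/p_{\rm DAEPP}$ pairs per delivered segment.

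The main obstacle is justifying that no cleverer symmetric-BBPSSW schedule beats the balanced tree. For example, a baseline might reuse pairs, or pair unequal levels. The theorem's restriction to \emph{symmetric} rounds handles the unequal-level case, since every round combines two pairs of equal fidelity. I would argue by induction on tree depth: any depth-$\geq k^*$ output contains at least one full level-$k^*$ subtree, each of whose nodes incurs the geometric overhead computed above. Under the noise model of Lemma~\ref{lem:depp}, failed attempts leave no reusable pair. I would state explicitly that the bound is on expected pair count and that it ignores storage and decoherence, both of which only increase the cost.
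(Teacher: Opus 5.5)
Your proposal is correct and follows essentially the same route as the paper. Both arguments count $2/p_{\rm acc}(T^r(\Fpz))$ expected pairs per symmetric round, multiply these factors up to level $k^*$, and use $p_{\rm acc}\le 1$ to get the $2^{k^*}$ bound. Your additions fill in steps the paper's short proof leaves implicit: the explicit Wald-identity recursion, the direct check from Eq.~\eqref{eq:Dfinal} that $\pg^2 D\le 1$, and the monotonicity argument that covers baselines using more than $k^*$ rounds.
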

\begin{proof}
Each round consumes two pairs of fidelity $T^{r}(\Fpz)$ and yields one of fidelity $T^{r+1}(\Fpz)$ with native probability $p_{\rm acc}\leq1$. Hence, the expected photon count to reach level $k^*$ is the stated product. However, if the symmetric map's upper fixed point ($F^\dagger$)\cite{dur1999quantum} lies below what DAHR achieves, that is, $F^\dagger<\Fstar$, then $k^*=\infty$.
\end{proof}

\begin{corollary}[Unbounded-cost regime]
\label{cor:unbounded}
If $\emeas,\pg<1$ and $\Fstar>F^\dagger$, no finite BDCZ schedule matches DAHR's per-segment fidelity while DAHR remains at $1/p_{\rm DAEPP}$ pairs per segment as represented in Figure~\ref{fig:corollary2} and further discussed in Sec.~\ref{sec:numerics}. 
\end{corollary}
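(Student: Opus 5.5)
The plan is to reduce the corollary to Theorem~\ref{thm:resource} by showing that, when $\emeas,\pg<1$, the symmetric map $T$ traps every orbit starting at $\Fpz$ below its upper fixed point $F^\dagger$. Then $\Fstar>F^\dagger$ forces $k^*=\infty$, and the bound in Eq.~\eqref{eq:resource-bound} is vacuous, i.e.\ unbounded. I would work with $T(F)=N(F,F)/D(F,F)$ from Eqs.~\eqref{eq:Nfinal}--\eqref{eq:Dfinal}. Since $D(F,F)>0$ on $[1/4,1]$ (because $m>0$ and $w_0>0$), the fixed-point equation $T(F)=F$ is equivalent to the quadratic equation $N(F,F)-F\,D(F,F)=0$. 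I would identify its roots as the lower fixed point $F_\ast$ and the upper fixed point $F^\dagger$, following D\"ur et al.~\cite{dur1999quantum}. I would also check that $F^\dagger<1$ whenever $\emeas<1$ or $\pg<1$. At $F=1$ we have $N=w_0+m$ and $D=w_0+4m$, so $T(1)<1$ as soon as $m>0$. If $\pg=1$ but $\emeas<1$, then $T(1)=w_0<1$.

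The key step is an invariance and monotonicity argument. First, I would show that $T$ is nondecreasing on the relevant interval $[F_\ast,1]$. The derivative of a ratio of quadratics has the sign of $N'D-ND'$, which I would verify is nonnegative there, or at least on $[\Fpz,F^\dagger]$. Then, for any $F\le F^\dagger$, monotonicity gives $T(F)\le T(F^\dagger)=F^\dagger$. By induction, $T^{r}(\Fpz)\le F^\dagger$ for all $r$, provided $\Fpz\le F^\dagger$. If instead $\Fpz>F^\dagger$, then $T(F)<F$ on $(F^\dagger,1]$, since the quadratic $N-FD$ is negative beyond its larger root. The orbit therefore decreases and never exceeds $\Fpz$. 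In that case one also needs $\Fstar>\Fpz$, which holds at the operating point. In either case $\sup_r T^r(\Fpz)\le\max(F^\dagger,\Fpz)<\Fstar$, so no $k^*$ exists. Finally, any BDCZ schedule built from symmetric rounds, including ones where intermediate pairs are not all at the same level, is dominated by this orbit by monotonicity. Hence no finite schedule reaches $\Fstar$.

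For the DAHR side, the cost $1/p_{\rm DAEPP}$ follows from Lemma~\ref{lem:depp}. It is finite because $p_{\rm DAEPP}=\pg^2D(\FP,\FS)\ge 4\pg^2 m=(1-\pg^2)/2>0$, and $D\ge w_0\FP\FS>0$ handles the case $\pg=1$.

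I expect the main obstacle to be the monotonicity of $T$ on the whole relevant interval under imperfect operations. The $w_1$ and $m$ terms perturb the simple BBPSSW shape. I would first try the direct sign computation of $N'D-ND'$ as a polynomial in $F$ with coefficients in $w_0,w_1,m$. If that becomes messy, I would restrict the claim to the interval $[\Fpz,F^\dagger]$ and use the quadratic sign structure of $N-FD$ as above. Non-symmetric schedules, which pair different levels, would be handled separately by monotonicity of $F'(\FS,\FP)$ in each argument.
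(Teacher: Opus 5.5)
Your proposal is correct and follows the paper's own route. The paper's only argument is the closing line of the proof of Theorem~\ref{thm:resource}: the symmetric orbit is capped by $F^\dagger$, so $k^*=\infty$. Your invariance argument makes that rigorous, including the case $\Fpz>F^\dagger$, which does occur at the low end of the paper's window (e.g.\ $F^\dagger\approx0.84<\Fpz$ at $x=\emeas=\pg=0.968$). The monotonicity you flag as the main obstacle holds on all of $[1/4,1]$, since $81(N'D-ND')=(1-\wone)\bigl[-24(1-2\wone)F^2+(84-24\wone+576m)F-6+3\wone-36m\bigr]$, and the bracket is concave with values $13.5+108m$ at $F=1/4$ and $54+27\wone+540m$ at $F=1$.

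There are two slips to fix, neither of which breaks the argument. First, $9(N-FD)=(4F-1)\bigl[-(2-4\wone)F^2+(3-4\wone)F-(1+9m)\bigr]$ is a cubic with the permanent root $F=1/4$; your ``quadratic'' is the cofactor, and your sign argument applies to it. Second, for $\pg=1$ one has $T(1)=1$, not $\wzero$ ($\wzero$ is the acceptance probability), so only $\pg<1$ puts $F^\dagger$ below $1$. Also rename your lower fixed point, since $F_\ast$ is typographically identical to the paper's $\Fstar$.
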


\begin{figure*}[t]
\centering
\includegraphics[width=0.9\textwidth]{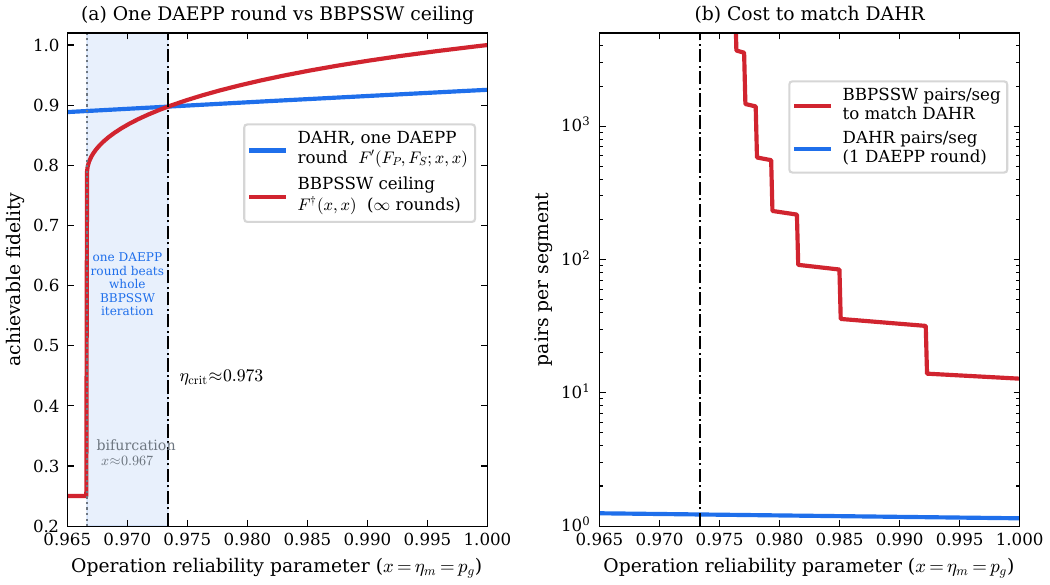}
\caption{For Corollary~\ref{cor:unbounded} with DAHR and BBPSSW at the same operation reliability $x=\emeas=\pg$. (a) Fidelity head-to-head for DAHR's one DAEPP round $F'(\FP,\FS;x,x)$ (blue) and BBPSSW's $\infty$-round ceiling $F^\dagger(x,x)$ (red). The two cross at $\eta_{\rm crit}\approx0.973$, and the shaded band $0.967<x<\eta_{\rm crit}$ (above the bifurcation $\approx0.967$) is the window where one DAEPP round beats BBPSSW's whole iteration. (b) Above $\eta_{\rm crit}$ BBPSSW can match DAHR but at exploding photon cost. Pairs/segment to match diverges as $x\to\eta_{\rm crit}^+$, against DAHR's flat $\sim1.15$ pairs/segment.}
\label{fig:corollary2}
\end{figure*}

\section{Numerical Evaluation}
\label{sec:numerics}
We select a representative operation point for evaluation at $\Fpz=0.85$, $\Fsz=0.95$, $\eBSA=p_1=p_2=0.99$, in two regimes: \emph{ideal operation reliability} ($\emeas=\pg=1$, the protocol ceiling) and \emph{matched operation reliability} ($\emeas=\pg=0.99$). Note that it is common to assume equal measurement and gate reliability parameters and to use a single parameter to characterise the overall operation reliability~\cite{dur1999quantum}. The operating point for evaluation is selected based on the experimental working values from~\cite{hu2021long}, which reports the fidelities of hyperentanglement state distributed over 11 km in the multi-core fiber (MCF) of around $\sim 0.95$ for both polarisation and spatial modes before the introduction of bit-flip errors. We keep the same value for the spatial mode and a lower value for the polarisation mode, considering the spatial-robustness argument as discussed in Sec.~\ref{sec:asymmetry}. All numerics come from the implementation of Eqs.~\eqref{eq:twofid}-\eqref{eq:Dfinal} and Eq.~\eqref{eq:bdcz} with native success probabilities, validated against the derivation checks (Appendix~\ref{app:twofid}).

\paragraph*{Per-segment DAEPP.} One DAEPP round gives $\Fstar=0.926$ (ideal operation reliability) with $p_{\rm DAEPP}=0.873$, i.e.\ $1.15$ HE pairs per delivered segment and $\Fstar=0.915$ (matched operation reliability) with $p_{\rm DAEPP}=0.851$, i.e.\ $1.17$ HE pairs per delivered segment. While a single BBPSSW round on two $\FP=0.85$ copies gives only $\Fstar = 0.884$ (ideal operation reliability) and $\Fstar = 0.874$ (matched operation reliability).

\paragraph*{End-to-end fidelity.} Figure~\ref{fig:fidelity-vs-N} shows $F_N$ for raw BDCZ, BDCZ with $k=1,2,3,4$ BBPSSW rounds, and DAHR for both ideal purification operation reliability and matched operation reliability cases. At $N=8$, $F_8^{\DAHR}=0.484$ (ideal) and $0.457$ (matched), above BDCZ with two BBPSSW rounds throughout for the ideal purification operation reliability case and even marginally better than BDCZ with three BBPSSW rounds for the matched operation reliability case.

\paragraph*{Resource cost.} Figure~\ref{fig:budget} plots raw pairs per segment consumed against the per-segment fidelity delivered under ideal and matched purification-operation reliability regimes. In the ideal case, DAHR achieves $\Fstar = 0.926$ at $1.15$ raw pairs per segment. The BBPSSW staircase reaches $0.913$ at $k=2$ corresponding to $5.7$ pairs, and $0.937$ at $k=3$ corresponding to $12.8$ pairs. In the matched-operation reliability regime, DAHR attains $\Fstar = 0.915$ at $1.17$ pairs per segment, while BBPSSW must exceed $k=3$ to reach comparable fidelity, requiring approximately $15$ pairs. Across both regimes, matching DAHR requires the BDCZ baseline to expend roughly $5$ to $11$ times more photons per segment.

\paragraph*{BBPSSW ceiling (Corollary~\ref{cor:unbounded}).}
Figure~\ref{fig:corollary2} plots the head-to-head between DAHR and BBPSSW on a single imperfection parameter axis obtained by setting $x=\emeas=\pg$ (operation reliability parameter). Panel (a) shows DAHR's one-round DAEPP output $F'(\FP,\FS;x,x)$ with respect to the BBPSSW infinite-round ceiling $F^\dagger(x,x)$. The two cross at $\eta_{\rm crit}\approx0.973$. For $0.967<x<\eta_{\rm crit}$ (the band shaded above the BBPSSW bifurcation at $\approx0.967$) a single DAEPP round beats the entire BBPSSW iteration, while above $\eta_{\rm crit}$ BBPSSW's ceiling exceeds DAHR's one-round output. While Panel (b) shows what that ``BBPSSW wins above $\eta_{\rm crit}$'' actually costs. The photons/segment BBPSSW must spend to \emph{match} DAHR's one-round output diverges as $x\to\eta_{\rm crit}^+$, climbing from $\sim13$ at $x=1$ through $33$ at $x=0.99$, $227$ at $x=0.98$, to $\infty$ at $\eta_{\rm crit}$, while DAHR remains at $\sim1.15$ throughout.

\begin{figure}[t]
\centering
\includegraphics[width=\columnwidth]{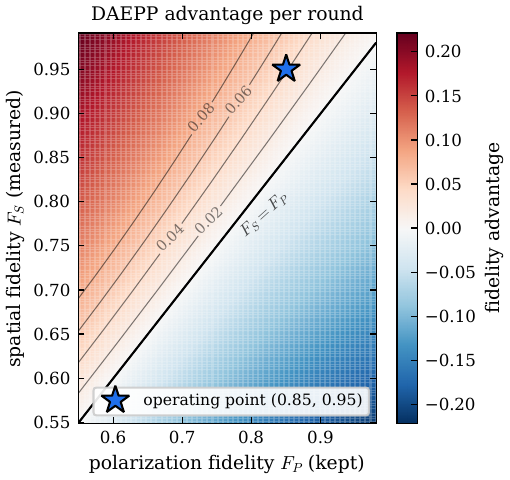}
\caption{Single-round DAEPP advantage $F'(\FP,\FS)-F'(\FP,\FP)$ over the spatial-polarization fidelity plane (ideal operation reliability). The region is Positive (red) wherever $\FS>\FP$. The star here marks the represented operating point in the work, that is, $(0.85,0.95)$. The diagonal $\FS=\FP$ is the BBPSSW-equivalent line.}
\label{fig:crossover}
\end{figure}
\paragraph*{Per-round advantage map.}
Figure~\ref{fig:crossover} maps the single-round advantage $F'(\FP,\FS)-F'(\FP,\FP)$ of DAEPP over BBPSSW across the $(\FP,\FS)$ plane. It is positive precisely where the spatial mode is the more robust DOF, $\FS>\FP$ (above the diagonal), and the operating point sits at $+0.041$. The advantage is the direct payoff of having the polarisation mode against a robust reference rather than a noisy copy of itself.

\section{Summary and Outlook}
\label{sec:conclusion}
In this work, we proposed and analysed a DAHR (DOF-Assisted Hyperentanglement Repeater chain protocol) variant which utilises an additional degree of freedom for the entanglement purification step required in long-distance entanglement distribution. Each segment in a linear quantum repeater chain network operating under DAHR undergoes DAEPP (DOF-Assisted Entanglement Purification Protocol) where intra-photon operation accompanied by measurement on the robust spatial degree of freedom purifies the distributed entanglement in polarisation mode. 

We generalised BBPSSW with the imperfect operations recurrence expression of D\"ur et al. to two unequal Werner fidelities corresponding to two different degrees of freedom of a hyperentanglement pair and embedded it in closed form into the BDCZ chain recursion (Theorem~\ref{thm:dahr-recursion}). Because the same map reduces to D\"ur et al.'s expression of the BBPSSW recurrence at equal fidelities, DAHR and the BDCZ baseline are compared inside one first-principles model. The derived generalised map is asymmetric under $\FS\leftrightarrow \FP$ due to asymmetry in the numerator, which comes from the $ \wone$-weighted term that corresponds to the case of parity match in the purification procedure where exactly one of the qubits of the partner pair undergoes flipping. Following the spatial-robustness argument, comparison between DAHR with a single DAEPP round and BDCZ with multiple BBPSSW purification rounds at a representative asymmetric operating point, with fidelity values at the level demonstrated by the 11-km multicore-fiber distribution experiment of a hyperentangled state. In the analysis, we discover that DAHR, operating with just a single DAEPP round, outperforms BDCZ, even with multiple purification rounds across the number of segments involved in the linear chain of repeaters, while consuming one DOF rather than a whole pair. To match or beat DAHR, the baseline BDCZ with BBPSSW need $5$--$11$ times more photons per segment.

The analysis presented in this work connects work in hyperentanglement with quantum networking, which at the moment only utilises single DOF freedom for entanglement distribution in quantum networks. This opens the door to exploration of the usefulness of more than one DOF in quantum networking protocols and architectures, including engineering challenges in routing, optimisation, and resource allocation. Other open directions include DAHR variant which utilises conjugate-basis two-pass DAEPP for general Werner channels, which would clear the residual phase error on top of bit-flip error. The open question in that case would be resource counting, since the second pass would either require a second robust auxiliary DOF (e.g., frequency or time-bin as in \cite{sheng2010deterministic}) or a second hyperentanglement pair altogether. Further exploration could be of an HBSA (Hyperentanglement Bell state analysis \cite{sheng2010complete}) variant instead of BSA, deferring DAEPP to once per chain, and multicore-fiber tests at multi-segment networks ($N=2$--$4$).

\appendix

\section{Derivation of the two-fidelity recurrence}
\label{app:twofid}
In this section, we derive the general map of Eqs.~\eqref{eq:twofid}--\eqref{eq:Dfinal} by generalizing
D\"ur's symmetric BBPSSW recurrence~\cite{dur1999quantum} to two \emph{distinct}
Werner fidelities, that is $\FS\neq \FP$ (Figure~\ref{fig:distinct_fidelity}), under imperfect operations. 

Each input pair can be written as a
Bell-diagonal state in the convention
$A=P(\Phip)$, $B=P(\Psip)$, $C=P(\Psim)$, $D=P(\Phim)$, such that,
\begin{equation}
\rho_i = A_i\,\ketbra{\Phip}{\Phip}
       + B_i\,\ketbra{\Psip}{\Psip}
       + C_i\,\ketbra{\Psim}{\Psim}
       + D_i\,\ketbra{\Phim}{\Phim},
\label{eq:rhoi}
\end{equation}
for $i\in\{s,p\}$, the source/kept pair (qubits $1,2$) and the partner/measured
pair (qubits $3,4$) as depicted in Figure~\ref{fig:distinct_fidelity}. For Werner inputs of fidelity $F_i$,
\begin{equation}
A_i=F_i,\qquad B_i=C_i=D_i=\tfrac{1-F_i}{3},
\label{eq:wernersub}
\end{equation}
so that $\rho_s=\FS\ketbra{\Phip}{\Phip}+\tfrac{1-\FS}{3}(\ketbra{\Psip}{\Psip}
+\ketbra{\Psim}{\Psim}+\ketbra{\Phim}{\Phim})$ and similarly for $\rho_p$ with
$\FP$. We keep track of the distinct Bell pair combinations as in Table~\ref{tab:master}.
\begin{figure}[t]
\centering
\includegraphics[width=\columnwidth]{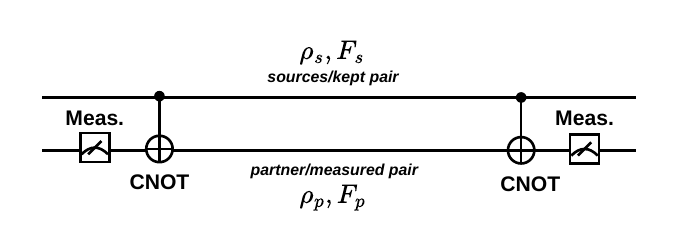}
\caption{Circuit diagram for the entanglement purification consuming two distinct Werner fidelities $\rho_s$ and $\rho_p$. The purification procedure consists of one two-qubit operation (\textsc{cnot}) and one single-qubit operation (Meas.).}
\label{fig:distinct_fidelity}
\end{figure}
While working with imperfections in the measurement operation, involving the two $\sigma_z$ readouts on the partner qubits, each flip with probability
$1-\emeas$. That is, for a single measurement apparatus employed, it would give a correct measurement result with $\eta_m$ probability and an incorrect result with $1-\eta_m$ probability. Considering measurement on both qubits of the partner pair with respect to the source pair, which undergoes no measurement operation, the overall outcome can be categorised in the following two ways. 
\begin{enumerate}\label{ref.points}
    \item If the measurement apparatus employed for qubit measurement allocates the correct basis of measurement, then the measurement outcome would be correct. This happens when either both of the qubits of the partner pair happen to not flip ($\eta_m^2$ probability) or both of the qubits of the partner pair happen to flip ($(1-\eta_m)^2$ probability). Together, this constitutes the total probability of correct basis selection for the measurement $\wzero = \eta_m^2+(1-\eta_m)^2$.
    \item If exactly one of the qubits of the partner pair undergoes flipping, then there would be a difference in the measurement basis among qubit of entangled pair and hence the wrong outcome is generated. The total probability for the wrong basis selection would be $\wone = 2 \eta_m (1-\eta_m)$.
\end{enumerate}

Table~\ref{tab:master} enumerates the $16$ input combinations, with the Werner
probability column Eq.~\eqref{eq:wernersub} written out, and the keep weight ($\wzero$
for matches, $\wone$ for mismatches) attached to each row. Exactly four rows reach $\Phip$: two genuine matches (rows~1,~16,
weight $\wzero$) and two false accepts (rows~2,~15, weight $\wone$).
\begin{table*}[t]
\caption{Enumeration for two unequal Werner pairs under imperfect
readout. Pair~1 is the source (control, kept); Pair~2 is the partner (target,
measured). The probability column is the Werner value Eq.~\eqref{eq:wernersub} of
that row; the weight is $\wzero=\emeas^2+(1-\emeas)^2$ for parity-match rows
(kept on a correct readout) and $\wone=2\emeas(1-\emeas)$ for
parity-mismatch rows ($a_s=a_p$, kept only on a single readout flip).}
\label{tab:master}
\centering
\small
\renewcommand{\arraystretch}{1.18}
\begin{tabular}{cccccccc}
\toprule
\# & Pair 1 (src) & Pair 2 (ptn) & $P_{\rm row}$ (Werner) & $a_s{=}a_p$ & weight & out & $\to\Phip$\\
\midrule
1  & $\Phip$ & $\Phip$ & $\FS\FP$                       & yes & $\wzero$ & $\Phip$ & $\checkmark$\\
2  & $\Phip$ & $\Psip$ & $\tfrac{\FS(1-\FP)}{3}$         & no  & $\wone$  & $\Phip$ & $\checkmark$\\
3  & $\Phip$ & $\Psim$ & $\tfrac{\FS(1-\FP)}{3}$         & no  & $\wone$  & $\Phim$ & \\
4  & $\Phip$ & $\Phim$ & $\tfrac{\FS(1-\FP)}{3}$         & yes & $\wzero$ & $\Phim$ & \\
5  & $\Psip$ & $\Phip$ & $\tfrac{(1-\FS)\FP}{3}$         & no  & $\wone$  & $\Psip$ & \\
6  & $\Psip$ & $\Psip$ & $\tfrac{(1-\FS)(1-\FP)}{9}$     & yes & $\wzero$ & $\Psip$ & \\
7  & $\Psip$ & $\Psim$ & $\tfrac{(1-\FS)(1-\FP)}{9}$     & yes & $\wzero$ & $\Psim$ & \\
8  & $\Psip$ & $\Phim$ & $\tfrac{(1-\FS)(1-\FP)}{9}$     & no  & $\wone$  & $\Psim$ & \\
9  & $\Psim$ & $\Phip$ & $\tfrac{(1-\FS)\FP}{3}$         & no  & $\wone$  & $\Psim$ & \\
10 & $\Psim$ & $\Psip$ & $\tfrac{(1-\FS)(1-\FP)}{9}$     & yes & $\wzero$ & $\Psim$ & \\
11 & $\Psim$ & $\Psim$ & $\tfrac{(1-\FS)(1-\FP)}{9}$     & yes & $\wzero$ & $\Psip$ & \\
12 & $\Psim$ & $\Phim$ & $\tfrac{(1-\FS)(1-\FP)}{9}$     & no  & $\wone$  & $\Psip$ & \\
13 & $\Phim$ & $\Phip$ & $\tfrac{(1-\FS)\FP}{3}$         & yes & $\wzero$ & $\Phim$ & \\
14 & $\Phim$ & $\Psip$ & $\tfrac{(1-\FS)(1-\FP)}{9}$     & no  & $\wone$  & $\Phim$ & \\
15 & $\Phim$ & $\Psim$ & $\tfrac{(1-\FS)(1-\FP)}{9}$     & no  & $\wone$  & $\Phip$ & $\checkmark$\\
16 & $\Phim$ & $\Phim$ & $\tfrac{(1-\FS)(1-\FP)}{9}$     & yes & $\wzero$ & $\Phip$ & $\checkmark$\\
\bottomrule
\end{tabular}
\end{table*}

The table with respect to the generalised map (Eqs.~\eqref{eq:twofid}--\eqref{eq:Dfinal}) is read as follows:
\paragraph{Numerator (rows $\to\Phip$).}
To reach $\Phip$, in the $\wzero$ branch the partner pair must also be
$\Phi$-type with matching phase (rows~1,~16). While in the $\wone$
branch the partner pair must be $\Psi$-type with the matching phase (rows~2,~15).
Collecting the Werner probability of these four rows we have,
\begin{equation}
\label{eq:N0app}
\begin{split}
N_0 ={}&
\Bigl[\FS \FP+\tfrac{(1-\FS)(1-\FP)}{9}\Bigr]w_0
\\
&+\Bigl[\tfrac{\FS(1-\FP)}{3}
+\tfrac{(1-\FS)(1-\FP)}{9}\Bigr]w_1.
\end{split}
\end{equation}
\paragraph{Denominator (all kept rows).}
Every match row contributes with $\wzero$ and every mismatch row with $\wone$.
Grouping through the parity marginals, we get,
\begin{equation}
\label{eq:D0app}
\begin{split}
D_0 ={}&
\Bigl[\FS \FP
+\tfrac{\FS(1-\FP)+\FP(1-\FS)}{3}
+\tfrac{5(1-\FS)(1-\FP)}{9}\Bigr]w_0
\\
&+\Bigl[\tfrac{(1-\FS)(1-\FP)}{9}
+\tfrac{\FS(1-\FP)+\FP(1-\FS)}{6}\Bigr]4w_1.
\end{split}
\end{equation}

The contrast between Eq.~\eqref{eq:N0app} and Eq.~\eqref{eq:D0app} is the origin of the
role asymmetry. The numerator's $\wone$ term keeps only the two
source-$\Phi$/partner-$\Psi$ rows that reach $\Phip$ (not symmetric under
$s\!\leftrightarrow\!p$), whereas the denominator's $\wone$ term sums all eight
mismatch rows and is symmetric. Ultimately expanding Eq.~\eqref{eq:wernersub} into
Eq.\eqref{eq:N0app}--Eq.\eqref{eq:D0app} and adding the floor (discussed below) gives exactly
Eqs.~\eqref{eq:Nfinal}--\eqref{eq:Dfinal} of the main text.

As shown in Fig.~\ref{fig:distinct_fidelity}, the purification step applies a
bilateral \textsc{cnot} $O_{13}O_{24}$ before the readout, one \textsc{cnot} on
either side. In the
D\"ur \emph{et al.} model~\cite{dur1999quantum} each factor is an ideal
\textsc{cnot} followed by a two-qubit depolariser of reliability $\pg$,

\begin{equation}
O_{ij}=\mathcal{D}_{\pg}\circ\textsc{cnot}_{ij},
\qquad
\mathcal{D}_{\pg}(\rho)=\pg\,\rho+(1-\pg)\,\frac{I_4}{4}.
\label{eq:gate-model}
\end{equation}
So each \textsc{cnot} is applied cleanly with probability $\pg$ and replaces its
pair with white noise with probability $1-\pg^2$. In other words, the two factors are independent, and the outcome
collapses into two branches, that is, with probability $\pg^2$ the ideal enumeration of Table~\ref{tab:master} applies as it is and with
probability $1-\pg^2$ at least one \textsc{cnot} depolarisation factor emerges. In the event of a single depolariser, the pair is maximally mixed, which sends the
whole four-qubit register to $I_{16}/16$. Therefore, the ``one fails'' and ``both
fail'' cases land in the same state and are tracked together.

Now the outcome of the application of imperfect \textsc{cnot} interacts with the imperfection in measurement which was treated above. Since we are interested in the case of accounting for imperfections, we focus on the case where depolarisation occurs with probability $1-\pg^2$. As the failed branch is input-independent, the imperfection
enters in closed form. On depolarisation, the four-qubit register is written as $I_{16}/16=\tfrac{I_4}{4}\otimes\tfrac{I_4}{4}$
(source $\otimes$ partner). Now the acceptance depends only on the partner pair. A
maximally mixed partner pair $\tfrac{I_4}{4}$ carries equal weight $\tfrac14$ on each of $\{ \Phi^{\pm}, \Psi^{\pm} \}$, hence parity is $\Phi$-type with probability $\tfrac12$. Hence the parity check
accepts with probability $\tfrac12$ regardless of the inputs. Now, the fidelity given the acceptance depends on the kept (source) pair, which is $\tfrac{I_4}{4}$, whose $\Phip$ fraction is
$\bra{\Phip}\tfrac{I_4}{4}\ket{\Phip}=\tfrac14$. On collecting both branches, the
unnormalised accepted-and-$\Phip$ weight and the total accepted weight are

\begin{equation}
\label{eq:branch-tallies}
\begin{split}
\mathcal{N} ={}&
\pg^2 N_0
+(1-\pg^2)\,\underbrace{\tfrac12}_{\text{accept}}
\cdot\underbrace{\tfrac14}_{\Phi^+\ \text{frac.}},
\\
\mathcal{D} ={}&
\pg^2 D_0
+(1-\pg^2)\,\underbrace{\tfrac12}_{\text{accept}},
\end{split}
\end{equation}

where $N_0,D_0$ are the clean-branch sums as in Eqs.~\eqref{eq:N0app}-\eqref{eq:D0app}.
Normalising by dividing through by $\pg^2$ so the clean terms appear, defines the additive
floor as

\begin{equation}
\label{eq:floor}
\begin{gathered}
m=\frac{1-\pg^2}{\pg^2}\cdot\frac12\cdot\frac14
=\frac{1-\pg^2}{8\pg^2}
\quad(\text{num.}),\\
\frac{1-\pg^2}{\pg^2}\cdot\frac12
=4m
\quad(\text{den.}),
\end{gathered}
\end{equation}
where the factor of $4=1/\bra{\Phip}\tfrac{I_4}{4}\ket{\Phip}$ is precisely the
reciprocal of the mixed-state $\Phip$ fraction that is, the denominator counts every
accepted white-noise event, the numerator only the quarter of them that read
$\Phip$. Hence

\begin{equation}
F'=\frac{N_0+m}{D_0+4m},
\label{eq:master-with-floor}
\end{equation}

which is the general map Eq.~\eqref{eq:twofid} of the main text.

Finally, the native (physical) acceptance probability is the total accepted
weight of Eq.~\eqref{eq:branch-tallies} before the $\pg^2$ rescaling,

\begin{equation}
p_{\mathrm{acc}}=\mathcal{D}=\pg^2 D_0+(1-\pg^2)\tfrac12=\pg^2\!\left(D_0+4m\right)
=\pg^2 D.
\label{eq:pacc}
\end{equation}

So the single quantity $D= D_0+4m$ serves both as the normaliser of the
fidelity in Eq.~\eqref{eq:master-with-floor} and, up to the $\pg^2$ prefactor, as
the acceptance probability. Fidelity and resource counting therefore share one
denominator, giving the heralding probabilities
$p_{\mathrm{DAEPP}}=\pg^2 D(\FP,\FS)$ and
$p_{\mathrm{acc}}^{\mathrm{BBPSSW}}(F)=\pg^2 D(F,F)$ used in
Sec.~\ref{ssec:resource} and Theorem~\ref{thm:resource}.

As mentioned above, for general map, the entire $s\!\leftrightarrow\!p$ asymmetry sits in the numerator's $\wone$
term. The asymmetry in the general map is,
\begin{align}
F'(\FS,\FP)-F'(\FP,\FS) =\frac{2\emeas(1-\emeas)}{3D}\,(\FS-\FP).
\end{align}

This is Eq.~\eqref{eq:asym}: the gap has
the sign of $\FS-\FP$ (keeping the higher-fidelity pair wins) and vanishes as
$\emeas\to1$, where the false-accept channel closes, and the slots become
interchangeable.

In summary, at $\FS=\FP$ the map reduces to D\"ur's symmetric
recurrence~\cite{dur1999quantum}; at $\FS=\FP$, $\emeas=\pg=1$ to the textbook
BBPSSW value $(10F^2-2F+1)/(8F^2-4F+5)$~\cite{bennett1996purification}. Table~\ref{tab:checks} collects these and other checks for the generalised map.

\begin{table*}[t]
\caption{Checks satisfied by the generalised map Eqs.~\eqref{eq:twofid}--\eqref{eq:Dfinal}.}
\label{tab:checks}
\centering
\small
\begin{tabular}{ll}
\toprule
Check & Result\\
\midrule
$\FS=\FP$ & D\"ur BBPSSW recurrence~\cite{dur1999quantum}\\
$\FS{=}\FP,\ \emeas{=}\pg{=}1$ & BBPSSW recurrence~\cite{bennett1996purification}\\
$\emeas=\pg=1$ & symmetric ideal recurrence Eq.~\eqref{eq:fstar-ideal}\\
$\emeas{=}\pg{=}1$ (two state bit-flip mixture)& Hu's formula Eq.~\eqref{eq:hu}\\
$\FS\to1$ & $\Fstar\to 3\FP/(2\FP{+}1)$\\
\bottomrule
\end{tabular}
\end{table*}

\section{Joint-isotropic generalization}
\label{app:joint}
The channel model of Sec.~\ref{sec:protocol} treats the polarisation and spatial
DoFs as independently twirled Werner states. Physically this is justified because
the two are degraded by distinct, uncorrelated mechanisms: polarisation by
intracore birefringence and stress-induced rotation, the spatial/path sector by
inter-core coupling and relative-phase drift between cores. Residual cross-talk can be
modelled by mixing the $16$-dimensional two-DoF Bell--Bell state toward the
maximally mixed state with weight $\lambda$ as,

\begin{equation}
\rho_{\rm in}\longmapsto(1-\lambda)\,\rho_P\otimes\rho_S+\lambda\,\frac{I_{16}}{16}.
\label{eq:joint-input}
\end{equation}

The DAEPP step is a fixed operation that does not adapt to its input, so it acts
on the mixture Eq.~\eqref{eq:joint-input} piece by piece. That is, applying it on each component and
combining the results with the same weights $1-\lambda$ and $\lambda$. For each
component we need only two quantities, the probability that the pair is kept and,
of those kept, the fraction that are the target state $\Phi^+$.

For the product part $\rho_P\otimes\rho_S$, these are exactly the quantities of
Lemma~\ref{lem:depp}: the pair is kept with probability $p_{\rm DAEPP}$ and the
kept pair has fidelity $\Fstar$. For the white-noise part,
$I_{16}/16=(I_4/4)_P\otimes(I_4/4)_S$ is a maximally mixed pair in each DoF. As in
Appendix~\ref{app:twofid}, keeping is decided by the spatial measurement, and a
maximally mixed spatial pair passes with probability $\tfrac12$; the kept
polarisation pair is likewise maximally mixed, so it is $\Phi^+$ with fraction
$\bra{\Phi^+}\tfrac{I_4}{4}\ket{\Phi^+}=\tfrac14$, independently of $\pg$.
Combining kept-fractions and good-and-kept fractions across the two components,

\begin{equation}
\Fstar(\lambda)=\frac{(1-\lambda)\,\Fstar\,p_{\rm DAEPP}+\lambda/8}
{(1-\lambda)\,p_{\rm DAEPP}+\lambda/2},
\label{eq:joint-exact}
\end{equation}

where the denominator is the total kept-fraction and the numerator is the
good-and-kept fraction. This is exact in $\lambda$, with $\Fstar(0)=\Fstar$ and
$\Fstar(1)=\tfrac14$, that is, as cross-talk grows, the output slides monotonically from
the clean value down to the white-noise floor $\tfrac14$. For small $\lambda$,

\begin{equation}
\Fstar(\lambda)=\Fstar-\lambda\Bigl(\Fstar-\tfrac14\Bigr)
\frac{1/2}{p_{\rm DAEPP}}+O(\lambda^2).
\label{eq:joint-firstorder}
\end{equation}

The perturbation only lowers $\Fstar$ (as $\Fstar>\tfrac14$ at any usable point)
and enters the chain result solely through the replacement
$\Fstar\to\Fstar(\lambda)$ in the per-segment factor $(4\Fstar-1)/3$, leaving the
structure of Theorem~\ref{thm:dahr-recursion} intact. The kept-fraction moves on
the same denominator, $p_{\rm DAEPP}(\lambda)=(1-\lambda)p_{\rm DAEPP}+\lambda/2$,
so the per-segment cost $1/p_{\rm DAEPP}(\lambda)$ shifts correspondingly. A complete $16\times16$ treatment tracking cross-DoF coherences,
relevant when $\lambda$ is not small or the joint noise is non-isotropic, is left
to future work.

\bibliography{apssamp}

@article{wehner2018quantum,
  title={Quantum internet: A vision for the road ahead},
  author={Wehner, Stephanie and Elkouss, David and Hanson, Ronald},
  journal={Science},
  volume={362},
  number={6412},
  pages={eaam9288},
  year={2018},
  publisher={American Association for the Advancement of Science}
}

@article{kumar2025quantum,
  title={Quantum internet: Technologies, protocols, and research challenges},
  author={Kumar, Vinay and Cicconetti, Claudio and Conti, Marco and Passarella, Andrea},
  journal={International Journal of Networked and Distributed Computing},
  volume={13},
  number={2},
  pages={22},
  year={2025},
  publisher={Springer}
}

@article{kumar2026making,
  title={Making Quantum Networks Work: Routing, Calibration, and Programmable Quantum Repeaters},
  author={Kumar, Vinay},
  journal={arXiv preprint arXiv:2606.22316},
  year={2026}
}

@article{kimble2008quantum,
  title={The quantum internet},
  author={Kimble, H Jeff},
  journal={Nature},
  volume={453},
  number={7198},
  pages={1023--1030},
  year={2008},
  publisher={Nature Publishing Group}
}

@article{sangouard2011quantum,
  title={Quantum repeaters based on atomic ensembles and linear optics},
  author={Sangouard, Nicolas and Simon, Christoph and De Riedmatten, Hugues and Gisin, Nicolas},
  journal={Reviews of Modern Physics},
  volume={83},
  number={1},
  pages={33--80},
  year={2011},
  publisher={APS}
}

@article{muralidharan2016optimal,
  title={Optimal architectures for long distance quantum communication},
  author={Muralidharan, Sreraman and Li, Linshu and Kim, Jungsang and L{\"u}tkenhaus, Norbert and Lukin, Mikhail D and Jiang, Liang},
  journal={Scientific reports},
  volume={6},
  number={1},
  pages={20463},
  year={2016},
  publisher={Nature Publishing Group UK London}
}

@article{zukowski1993event,
  title={‘‘Event-ready-detectors’’Bell experiment via entanglement swapping},
  author={{\.Z}ukowski, Marek and Zeilinger, Anton and Horne, Michael A and Ekert, Aarthur K},
  journal={Physical review letters},
  volume={71},
  number={26},
  pages={4287},
  year={1993},
  publisher={APS}
}

@article{zeilinger1997three,
  title={Three-particle entanglements from two entangled pairs},
  author={Zeilinger, Anton and Horne, Michael A and Weinfurter, Harald and {\.Z}ukowski, Marek},
  journal={Physical review letters},
  volume={78},
  number={16},
  pages={3031},
  year={1997},
  publisher={APS}
}

@article{bennett1996purification,
  title={Purification of noisy entanglement and faithful teleportation via noisy channels},
  author={Bennett, Charles H and Brassard, Gilles and Popescu, Sandu and Schumacher, Benjamin and Smolin, John A and Wootters, William K},
  journal={Physical review letters},
  volume={76},
  number={5},
  pages={722},
  year={1996},
  publisher={APS}
}

@article{deutsch1996quantum,
  title={Quantum privacy amplification and the security of quantum cryptography over noisy channels},
  author={Deutsch, David and Ekert, Artur and Jozsa, Richard and Macchiavello, Chiara and Popescu, Sandu and Sanpera, Anna},
  journal={Physical review letters},
  volume={77},
  number={13},
  pages={2818},
  year={1996},
  publisher={APS}
}

@article{briegel1998quantum,
  title={Quantum repeaters: the role of imperfect local operations in quantum communication},
  author={Briegel, H-J and D{\"u}r, Wolfgang and Cirac, Juan I and Zoller, Peter},
  journal={Physical Review Letters},
  volume={81},
  number={26},
  pages={5932},
  year={1998},
  publisher={APS}
}

@article{dur1999quantum,
  title={Quantum repeaters based on entanglement purification},
  author={D{\"u}r, Wolfgang and Briegel, H-J and Cirac, Juan Ignacio and Zoller, Peter},
  journal={Physical Review A},
  volume={59},
  number={1},
  pages={169},
  year={1999},
  publisher={APS}
}

@article{bennett1996mixed,
  title={Mixed-state entanglement and quantum error correction},
  author={Bennett, Charles H and DiVincenzo, David P and Smolin, John A and Wootters, William K},
  journal={Physical Review A},
  volume={54},
  number={5},
  pages={3824},
  year={1996},
  publisher={APS}
}

@article{kwiat1997hyper,
  title={Hyper-entangled states},
  author={Kwiat, Paul G},
  journal={Journal of modern optics},
  volume={44},
  number={11-12},
  pages={2173--2184},
  year={1997},
  publisher={Taylor \& Francis}
}

@article{zhao2023generation,
  title={Generation of hyperentangled state encoded in three degrees of freedom},
  author={Zhao, Peng and Yang, Meng-Ying and Zhu, Sha and Zhou, Lan and Zhong, Wei and Du, Ming-Ming and Sheng, Yu-Bo},
  journal={Science China Physics, Mechanics \& Astronomy},
  volume={66},
  number={10},
  pages={100311},
  year={2023},
  publisher={Springer}
}

@article{deng2017quantum,
  title={Quantum hyperentanglement and its applications in quantum information processing},
  author={Deng, Fu-Guo and Ren, Bao-Cang and Li, Xi-Han},
  journal={Science bulletin},
  volume={62},
  number={1},
  pages={46--68},
  year={2017},
  publisher={Elsevier}
}

@article{sheng2010complete,
  title={Complete hyperentangled-Bell-state analysis for quantum communication},
  author={Sheng, Yu-Bo and Deng, Fu-Guo and Long, Gui Lu},
  journal={Physical Review A—Atomic, Molecular, and Optical Physics},
  volume={82},
  number={3},
  pages={032318},
  year={2010},
  publisher={APS}
}

@article{sheng2010deterministic,
  title={Deterministic entanglement purification and complete nonlocal Bell-state analysis with hyperentanglement},
  author={Sheng, Yu-Bo and Deng, Fu-Guo},
  journal={Physical Review A—Atomic, Molecular, and Optical Physics},
  volume={81},
  number={3},
  pages={032307},
  year={2010},
  publisher={APS}
}

@article{sheng2010one,
  title={One-step deterministic polarization-entanglement purification using spatial entanglement},
  author={Sheng, Yu-Bo and Deng, Fu-Guo},
  journal={Physical Review A—Atomic, Molecular, and Optical Physics},
  volume={82},
  number={4},
  pages={044305},
  year={2010},
  publisher={APS}
}

@article{li2010deterministic,
  title={Deterministic polarization-entanglement purification using spatial entanglement},
  author={Li, Xi-Han},
  journal={Physical Review A—Atomic, Molecular, and Optical Physics},
  volume={82},
  number={4},
  pages={044304},
  year={2010},
  publisher={APS}
}

@article{ren2013hyperentanglement,
  title={Hyperentanglement purification and concentration assisted by diamond NV centers inside photonic crystal cavities},
  author={Ren, Bao-Cang and Deng, Fu-Guo},
  journal={Laser Physics Letters},
  volume={10},
  number={11},
  pages={115201},
  year={2013},
  publisher={IOP Publishing}
}

@article{ren2014two,
  title={Two-step hyperentanglement purification with the quantum-state-joining method},
  author={Ren, Bao-Cang and Du, Fang-Fang and Deng, Fu-Guo},
  journal={arXiv preprint arXiv:1408.0048},
  year={2014}
}

@article{wang2016hyperentanglement,
  title={Hyperentanglement purification for two-photon six-qubit quantum systems},
  author={Wang, Guan-Yu and Liu, Qian and Deng, Fu-Guo},
  journal={Physical Review A},
  volume={94},
  number={3},
  pages={032319},
  year={2016},
  publisher={APS}
}

@article{hu2021long,
  title={Long-distance entanglement purification for quantum communication},
  author={Hu, Xiao-Min and Huang, Cen-Xiao and Sheng, Yu-Bo and Zhou, Lan and Liu, Bi-Heng and Guo, Yu and Zhang, Chao and Xing, Wen-Bo and Huang, Yun-Feng and Li, Chuan-Feng and others},
  journal={Physical review letters},
  volume={126},
  number={1},
  pages={010503},
  year={2021},
  publisher={APS}
}

@article{victora2023entanglement,
  title={Entanglement purification on quantum networks},
  author={Victora, Michelle and Tserkis, Spyros and Krastanov, Stefan and de la Cerda, Alexander Sanchez and Willis, Steven and Narang, Prineha},
  journal={Physical Review Research},
  volume={5},
  number={3},
  pages={033171},
  year={2023},
  publisher={APS}
}

@inproceedings{benchasattabuse2025integrating,
  title={Integrating Entanglement Purification into All-Photonic Quantum Repeaters},
  author={Benchasattabuse, Naphan and Hajdu{\v{s}}ek, Michal and Van Meter, Rodney},
  booktitle={2025 IEEE International Conference on Quantum Computing and Engineering (QCE)},
  volume={1},
  pages={885--895},
  year={2025},
  organization={IEEE}
}

@article{mylavarapu2025teleportation,
  title={Teleportation fidelity of quantum repeater networks},
  author={Mylavarapu, Ganesh and Ghosh, Subrata and Hens, Chittaranjan and Chakrabarty, Indranil and Mitra, Subhadip},
  journal={Physical Review A},
  volume={112},
  number={3},
  pages={032618},
  year={2025},
  publisher={APS}
}

@article{miguel2023quantum,
  title={Quantum repeater for W states},
  author={Miguel-Ramiro, Jorge and Riera-S{\`a}bat, Ferran and D{\"u}r, Wolfgang},
  journal={PRX Quantum},
  volume={4},
  number={4},
  pages={040323},
  year={2023},
  publisher={APS}
}

@article{ghosal2025repeater,
  title={Repeater-based quantum communication protocol: Maximizing teleportation fidelity with minimal entanglement},
  author={Ghosal, Arkaprabha and Ghai, Jatin and Saha, Tanmay and Ghosh, Sibasish and Alimuddin, Mir},
  journal={Physical Review Letters},
  volume={134},
  number={16},
  pages={160803},
  year={2025},
  publisher={APS}
}

\end{document}